\newcommand{\eps}{\varepsilon}
\newcommand{\size}[1]{\left|#1\right|}
\newcommand{\Oh}{\mathcal{O}}
\newcommand{\NP}{\textsf{NP}\xspace}
\newcommand{\FPT}{\textsf{FPT}\xspace}
\newcommand{\kmed}{\textsc{$k$-Median}\xspace}
\newcommand{\capkmed}{\textsc{Capacitated $k$-Median}\xspace}
\newcommand{\colmed}{\textsc{co-$\ell$-Median}\xspace}
\newcommand{\capcolmed}{\textsc{Capacitated co-$\ell$-Median}\xspace}
\newcommand{\support}{\ensuremath{\eps\textsc{-supp}}}
\newcommand{\defparproblem}[4]{
	\vspace{1mm}
	\noindent\fbox{
		\begin{minipage}{0.95\textwidth}
			\begin{tabular*}{\textwidth}{@{\extracolsep{\fill}}lr} \textsc{#1} & {\bf{Parameter:}} #3 \\ \end{tabular*}
			{\bf{Input:}} #2 \\
			{\bf{Task:}} #4
		\end{minipage}
	}
	\vspace{1mm}
}
\begin{document}

\title{To Close Is Easier Than To Open: \\ Dual Parameterization To $k$-Median\thanks{
Part of this work was done while the third and the fifth author were visiting University of Wroclaw. The fifth author was supported by the Foundation for Polish Science (FNP).}}

\titlerunning{To Close Is Easier Than To Open}

\author{Jaros\l{}aw Byrka\inst{1}
\and Szymon Dudycz\inst{1}
\and Pasin Manurangsi\inst{2}
\and \\ Jan Marcinkowski\inst{1}
\and Micha\l{} W\l{}odarczyk\inst{3}}

\authorrunning{J. Byrka et al.}

\institute{University of Wroc\l{}aw, Poland \\ \email{\{jby,szymon.dudycz,jan.marcinkowski\}@cs.uni.wroc.pl} \and
Google Research, Mountain View, USA \\
\email{pasin@google.com} \and
Eindhoven University of Technology, Netherlands \\
\email{m.wlodarczyk@tue.nl}}

\maketitle

\begin{abstract}
The \kmed problem is one of the well-known optimization problems that formalize the task of data clustering.
Here, we are given sets of facilities $F$ and clients $C$, and the goal is to open $k$ facilities from the set $F$, which provides the best division into clusters, that is, the sum of distances from each client to the closest open facility is minimized.
In the \capkmed, the facilities are also assigned capacities specifying how many clients can be served by each facility.

Both problems have been extensively studied from the perspective of approximation algorithms.
Recently, several surprising results have come from the area of parameterized complexity, which provided better approximation factors via algorithms with running times of the form $f(k)\cdot poly(n)$.
In this work, we extend this line of research by studying a~different choice of parameterization.
We consider the parameter $\ell = |F| - k$, that is, the number of facilities that remain closed.
It turns out that such a~parameterization reveals yet another behavior of \kmed.
We observe that the problem is W[1]-hard but it admits a parameterized approximation scheme.
Namely, we present an~algorithm with running time $2^{\Oh(\ell\log(\ell/\eps))}\cdot poly(n)$ that achieves a~$(1+\eps)$-approximation.
On the other hand, we show that under the assumption of Gap Exponential Time Hypothesis, one cannot extend this result to the capacitated version of the problem.
\end{abstract}

\newpage

\section{Introduction}

Recent years have brought many surprising algorithmic results originating from the intersection of the areas of approximation algorithms and parameterized complexity.
It turns out that the combination of techniques from these theories can be very fruitful and a~new research area has emerged, devoted to studying \emph{parameterized approximation algorithms}.
The main goal in this area it to design an algorithm processing an instance $(I,k)$ in time $f(k)\cdot |I|^{\Oh(1)}$,
where $f$ is some computable function, and producing an approximate solution to the optimization problem in question.
Such algorithms, called \emph{FPT approximations}, are particularly interesting in the case of problems for which (1) we fail to make progress on improving the approximation factors in polynomial time, and
(2) there are significant obstacles for obtaining exact parameterized algorithms.  
Some results of this kind are FPT approximations for \textsc{$k$-Cut}~\cite{gupta2018fpt}, \textsc{Directed Odd Cycle Transversal}~\cite{doct}, and \textsc{Planar Steiner Network}~\cite{chitnis2018parameterized}.
A~good introduction to this area can be found in the survey~\cite{feldmann2020survey}.

One problem that has recently enjoyed a significant progress in this direction is the famous \kmed problem.
Here, we are given a
    set \(F\) of facilities, a set \(C\) of clients, a  metric \(d\) over \(F
    \cup C\) and an upper bound \(k\) on the number of~facilities we can open.
    A solution is a set \(S \subseteq F\) of at most \(k\) open facilities
    and a connection assignment \(\phi: C \to S\) of clients to the open facilities.
    The goal is to find a solution that minimizes the connection cost \(\sum_{c \in C}d(c, \phi(c))\).
The problem can be approximated in polynomial time up to a~constant factor~\cite{arya2004local,charikar1999constant} with the currently best approximation factor being \((2.675+\eps)\)~\cite{byrka2015improved}.
On the other hand, we cannot hope for a polynomial-time~$(1+{2}/{e}-\eps)$-approximation, since
it would entail P=NP~\cite{guha1999greedy}.
Therefore, there is a~gap in our understanding of the optimal approximability of \kmed.

Surprisingly, the situation becomes simpler if we consider parameterized algorithms, with $k$ as the~natural choice of parameterization.
Such a~parameterized problem is W[2]-hard~\cite{esa} so it is unlikely to admit an~exact algorithm with running time of the form $f(k)\cdot n^{\Oh(1)}$, where $n$ is the size of an~instance.
However, Cohen-Addad et al.~\cite{kmedian-fpt} have obtained an algorithm with approximation factor $(1+{2}/{e}+\eps)$
and running time\footnote{We omit the dependency on $\eps$ in the running time except for approximation schemes.} $2^{\Oh(k\log k)}\cdot n^{\Oh(1)}$.
This result is essentially tight, as the existence of an~FPT-approximation with factor $(1+{2}/{e}-\eps)$ would contradict the \emph{Gap Exponential Time Hypothesis}\footnote{The Gap Exponential Time Hypothesis~\cite{Dinur16,MR17} states that, for some constant $\gamma > 0$, there is no $2^{o(n)}$-time algorithm that can, given a 3SAT instance, distinguish between (1) the instance is fully satisfiable or (2) any assignment to the instance violates at least $\gamma$ fraction of the clauses.} (Gap-ETH)~\cite{kmedian-fpt}.
The mentioned hardness result has also ruled out running time of the form $f(k)\cdot n^{g(k)}$, where  $g = k^{poly(1/\eps)}$.
This lower bound has been later strengthened: under Gap-ETH no algorithm with running time $f(k)\cdot n^{o(k)}$ can achieve approximation factor $(1+{2}/{e}-\eps)$~\cite{M20}.

The parameterized approach brought also a~breakthrough to the understanding of \capkmed.
In this setting, each facility \(f\) is associated with a~capacity \(u_f \in
    \mathbb{Z}_{\geqslant 0}\) and the connection assignment $\phi$ must satisfy
    \(\size{\phi^{-1}(f)} \leqslant u_f\) for every facility \(f \in S\).
The best known polynomial-time approximation for \capkmed
is burdened with a~factor $\Oh(\log k)$~\cite{esa,DBLP:conf/stoc/CharikarCGG98} and relies on the generic technique of metric tree embeddings with expected logarithmic
    distortion~\cite{FakcharoenpholRT03}.
All
    the known constant-factor approximations violate either the number of facilities or the
    capacities.  Li  has provided  such an~algorithm  by opening \((1 +
    \eps) \cdot k\) facilities~\cite{Li15uniform,Li16non_uniform}. Afterwards
    analogous results, but violating the capacities by a~factor of \((1 +
    \eps)\) were also
    obtained~\cite{DBLP:conf/ipco/ByrkaRU16,DBLP:conf/icalp/DemirciL16}.    
This is in contrast with other
    capacitated clustering problems such as \textsc{Facility Location} or \textsc{$k$-Center},
    for which constant factor approximation algorithms have been constructed~\cite{cygan2012lp,korupolu2000analysis}.
However, no superconstant lower bound for \capkmed is known.

When it comes to parameterized algorithms, Adamczyk et al.~\cite{esa} have presented a $(7+\eps)$-approximation algorithm with running time $2^{\Oh(k\log k)}\cdot n^{\Oh(1)}$ for \capkmed.
Xu et al.~\cite{DBLP:journals/corr/abs-1901-04628} proposed a similar algorithm for the related \textsc{Capacitated} $k$-\textsc{Means} problem, where one minimizes the sum of~squares of distances. 
These results have been improved by Cohen-Addad and Li~\cite{improved}, who obtained factor $(3+\eps)$ for \textsc{Capacitated} $k$-\textsc{median} and $(9+\eps)$ for \textsc{Capacitated} $k$-\textsc{means}, within the same running time.

\subsubsection{Our contribution}
In this work, we study a different choice of parameterization for \kmed.
Whereas $k$ is the number of facilities to open, we
consider the dual parameter $\ell = |F| - k$: the number of facilities to be closed.
We refer to this problem as \colmed in order to avoid ambiguity.
Note that even though this is the same task from the perspective of polynomial-time algorithms, it is a~different problem when seen through the lens of parameterized complexity.
First, we observe that \colmed is W[1]-hard (Theorem \ref{thm:w1hard}), which motivates the study of approximation algorithms also for this choice of parameterization.
It turns out that switching to the dual parameterization changes the approximability status dramatically and we can obtain an~arbitrarily good approximation factor.
More precisely, we present an efficient parameterized approximation scheme (EPAS), i.e.,  $(1+\eps)$-approximation with running time of the form $f(\ell,\eps)\cdot n^{\Oh(1)}$.
This constitutes our main result.

\begin{theorem}
\label{thm:uncap}
The \colmed problem admits a deterministic $(1+\eps)$-approximation algorithm running in time $2^{\Oh(\ell\log(\ell/\eps))}\cdot n^{\Oh(1)}$ for any constant $\eps > 0$.
\end{theorem}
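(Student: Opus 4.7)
The plan is to exploit the small parameter $\ell$ by reducing the problem to a bounded enumeration over a shortlist of candidate facilities to close. The key starting observation is that in any feasible solution (closing exactly $\ell$ facilities), each client $c$ is served by one of its $\ell+1$ nearest facilities, since at most $\ell$ among its $\ell+1$ closest can be closed. Letting $\sigma(c)$ denote the facility nearest to $c$, the cost of the trivial ``open-all'' assignment, $\mathrm{BASE} := \sum_{c\in C} d(c,\sigma(c))$, provides the useful lower bound $\mathrm{OPT}\geq \mathrm{BASE}$.

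Given this, the algorithm I propose runs in two phases. First, compute in polynomial time an $\Oh(1)$-approximate $k$-median solution $S_0\subseteq F$ (via, e.g., local search~\cite{arya2004local}) and construct from it a candidate shortlist $F_{\mathrm{cand}}\subseteq F$ of size $\mathrm{poly}(\ell/\eps)$ such that some $\ell$-subset of $F_{\mathrm{cand}}$ is a valid $(1+\eps)$-approximate closure. A plausible construction takes $F\setminus S_0$ (already of size $\ell$) together with, for each $f\in F\setminus S_0$, a polynomial-in-$(\ell/\eps)$ number of facilities from $S_0$ acting as natural ``swap partners'' (e.g., the nearest ones in the metric, or those whose individual removal cost is smallest). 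Second, the algorithm enumerates all $\binom{|F_{\mathrm{cand}}|}{\ell}$ subsets in $(\ell/\eps)^{\Oh(\ell)} = 2^{\Oh(\ell\log(\ell/\eps))}$ time, evaluates each closure by assigning every client to its nearest open facility (polynomial per evaluation), and returns the best.

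The main obstacle will be proving the shortlist lemma: that $F_{\mathrm{cand}}$ actually contains a near-optimal $\ell$-subset. The natural route is an exchange argument starting from an optimal closure $F^*_{\mathrm{opt}}$: any facility $f\in F^*_{\mathrm{opt}}\setminus F_{\mathrm{cand}}$ should be swappable for some $f'\in F_{\mathrm{cand}}$ at an amortized cost of $\Oh(\eps\cdot\mathrm{OPT}/\ell)$, so that at most $\ell$ such swaps accumulate to at most $\eps\cdot\mathrm{OPT}$. Bounding each swap's cost is the technical heart of the proof and would leverage both the $(\ell+1)$-nearest observation (to limit the set of clients affected by any single swap) and the lower bound $\mathrm{OPT}\geq \mathrm{BASE}$ (to control the extra re-routing cost relative to $\mathrm{OPT}$). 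If a direct combinatorial exchange does not close the gap, a fallback is to branch iteratively: at each of $\ell$ recursion levels, guess the next facility to close from $\mathrm{poly}(\ell/\eps)$ ``plausible'' candidates (measured by current marginal cost increase), yielding the same $2^{\Oh(\ell\log(\ell/\eps))}$ tree size.
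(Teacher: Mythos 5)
Your outer loop (enumerate $\ell$-subsets of a $\mathrm{poly}(\ell/\eps)$-size candidate set in time $2^{\Oh(\ell\log(\ell/\eps))}$, evaluate each closure exactly) is fine, but the entire burden of the proof sits in the ``shortlist lemma,'' which you have not proved and which I do not believe holds for the construction you propose. A constant-factor \kmed solution $S_0$ carries essentially no structural information about \emph{which} facilities an optimal solution closes: every solution pays the base cost $\mathrm{BASE}=\sum_{c}d(c,\sigma(c))$, so an $\alpha$-approximation may close $\ell$ facilities that are metrically unrelated to the optimal closure (any closure whose extra cost is at most $(\alpha-1)\mathrm{BASE}$ is admissible), and then neither $F\setminus S_0$ nor its nearby ``swap partners'' need contain a near-optimal closure. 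Your exchange argument also does not go through as stated: swapping $f\in F^{*}_{\mathrm{opt}}\setminus F_{\mathrm{cand}}$ for an arbitrary $f'\in F_{\mathrm{cand}}$ forces you to reroute all of $V(f')$, whose cost has no $\Oh(\eps\cdot\mathrm{OPT}/\ell)$ bound in general; and the marginal cost of closing a facility is not additive, since closing one facility can drastically raise the cost of closing a nearby one (its clients' best reroute target may itself be closed), so per-swap amortization and your fallback ``branch on plausible candidates by current marginal cost'' both founder on exactly these interactions. Note also that a $\mathrm{poly}(\ell/\eps)$-size candidate set containing a $(1+\eps)$-approximate closure is close in spirit to the approximate-kernelization question the paper explicitly leaves open, which is further evidence that this step is not a routine technicality.

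The paper avoids any shortlist: the optimal closed set may be anywhere among the $n$ facilities. Instead it defines the $\eps$-support of $Opt$ (the open facilities that are nearest to some closed facility or receive a $\frac{\eps}{6\ell^2}\cdot cost(Opt)$ fraction of rerouted cost), shows it has size $\Oh(\ell^3/\eps)$, and uses color-coding/splitters to enumerate $2^{\Oh(\ell\log(\ell/\eps))}\log n$ partitions $F=A\uplus B$ so that some partition has $Opt\subseteq A$ and the support in $B$. Given such a partition (plus a factor-$2$ guess $D$ of $cost(Opt)$, which is the only use made of the polynomial-time approximation), it computes for each $f\in A$ a \emph{required set} $R_f$ of forced co-closures, defines the marginal cost $m_f$ relative to closing $R_f$ alone, and greedily picks the $\ell$ cheapest; the interaction between closed facilities is then controlled by the support separation and a factor-$3$ rerouting lemma inside Voronoi cells, yielding an additive error of at most $\eps D$. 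This mechanism is precisely the missing content in your sketch, so as it stands the proposal has a genuine gap.
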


We obtain this result by combining the technique of color-coding from the FPT theory 
with a~greedy approach common in the design of approximation algorithms. 
The running time becomes polynomial whenever we want to open all but $\Oh\left(\frac{\log n}{\log\log n}\right)$ facilities.
To the best of our knowledge, this is the first non-trivial setting with general metric space which admits an approximation scheme. 

A natural question arises about the behavior of the capacitated version of~the~problem in this setting, referred to as \capcolmed.
Both in polynomial-time regime or when parameterized by $k$, there is no evidence that the capacitated problem is any harder and the gap between the approximation factors might be just a~result of our lack of understanding.
Somehow surprisingly, for the dual parameterization $\ell$ we are able to show a~clear separation between the capacitated and uncapacitated case.
Namely, we present a~reduction from the \textsc{Max $k$-Coverage} problem which entails the same approximation lower bound as for the uncapacitated problem parameterized by $k$.

\begin{theorem} \label{thm:hardness-capacitated-intro}
Assuming Gap-ETH, there is no $f(\ell) \cdot n^{o(\ell)}$-time algorithm that can approximate \capcolmed to within a factor of $(1 + 2/e - \epsilon)$ for any function~$f$ and any constant $\epsilon > 0$.
\end{theorem}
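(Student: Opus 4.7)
The plan is to reduce from \textsc{Max $k$-Coverage}, invoking its Gap-ETH-based inapproximability: no $f(k)\cdot n^{o(k)}$-time algorithm achieves a $(1-1/e+\eps)$-approximation for this problem~\cite{M20}, which is the same source used for the $(1+2/e)$-hardness of plain \kmed in~\cite{kmedian-fpt}.

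Given a \textsc{Max $k$-Coverage} instance $(U,\mathcal{S},k)$, I would construct a \capcolmed instance as follows. For each set $S\in\mathcal{S}$, introduce a facility $f_S$ with capacity $|S|$; for each element $u\in U$, introduce a client $c_u$; and add a sink facility $f^*$ with large capacity, rendered effectively mandatory-open by attaching many anchor clients at distance $0$. Set $d(c_u,f_S):=1$ if $u\in S$ and $3$ otherwise, and $d(c_u,f^*):=3$, extending the metric by shortest-path closure. Take $\ell:=|\mathcal{S}|-k$, so closing $\ell$ facilities leaves exactly $k$ set-facilities open besides $f^*$.

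In the YES case, the $k$ sets forming a perfect cover correspond to the $k$ open set-facilities; each $c_u$ routes to a covering $f_S$ at cost $1$, capacities are respected since at most $|S|$ clients want $f_S$, and the total cost is $|U|$. In the NO case, for any $k$ remaining-open set-facilities, at most $(1-1/e+\eps)|U|$ elements are covered, so---using the capacity constraints to bound the bipartite $b$-matching by the coverage number---at most that many clients can be served at cost $1$; the remaining $\geq (1/e-\eps)|U|$ clients pay $3$ each via $f^*$, giving total cost $\geq (1+2/e-O(\eps))|U|$. Hence a $(1+2/e-\eps)$-approximation for \capcolmed would distinguish the two cases and refute the Max $k$-Coverage hardness.

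The main obstacle is the parameter transfer. The reduction naturally yields $\ell=|\mathcal{S}|-k$, which may be much larger than $k$; to derive the $f(\ell)\cdot n^{o(\ell)}$ bound from the $f(k)\cdot n^{o(k)}$ Max $k$-Coverage hardness one must either invoke the hardness on Max $k$-Coverage instances with a controlled $|\mathcal{S}|/k$ ratio (as furnished by the Gap-ETH reduction), or else pad the median instance with capacitated gadgets that balance $|F|$ against $\ell$ while preserving the gap. The capacities are essential for the NO-case bound: without them the uncapacitated EPAS of Theorem~\ref{thm:uncap} would give a $(1+\eps)$-approximation in $f(\ell)\cdot n^{\Oh(1)}$ time, incompatibly with the claimed hardness; the capacities rigidify the matching between covered clients and open facilities, preventing any cost-$1$ assignment beyond the coverage number and thereby isolating the difficulty to the capacitated variant.
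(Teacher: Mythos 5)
There is a genuine gap, and it is exactly the point you flag yourself: the parameter transfer. Your reduction opens the $k$ chosen set-facilities and closes the rest, so $\ell=|\mathcal{S}|-k$, which is not bounded by any function of $k$ in the hard instances. Neither of your proposed repairs can work. First, hard \textsc{Max $k$-Coverage} instances with $|\mathcal{S}|$ bounded by $g(k)\cdot k$ simply do not exist under the cited hardness: one could enumerate all $\binom{|\mathcal{S}|}{k}\le f(k)$ candidate solutions in FPT time, contradicting the $f(k)\cdot n^{o(k)}$ lower bound itself. Second, padding with extra gadgets can only add facilities and hence cannot bring $\ell$ down below $|\mathcal{S}|-k$ as long as ``open facilities $=$ chosen sets.'' Relatedly, in your construction the capacities do no real work: a client $c_u$ pays $1$ exactly when some open set contains $u$, with or without capacities, so your gadget is essentially the classical uncapacitated \kmed hardness reduction, which (as you note must be the case, given Theorem~\ref{thm:uncap}) cannot yield hardness in the parameter $\ell$.

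The paper's reduction resolves precisely this issue by inverting the correspondence: \emph{closing} a facility corresponds to \emph{choosing} a subset, so that $\ell=k$. Concretely, each subset $T_i$ gets a facility of capacity $|T_i|$ with $|T_i|$ co-located clients, and each element $u$ gets a facility of capacity $|U|+2$ pinned down by $|U|+1$ co-located anchor clients, at distance $1$ from the set-facilities containing $u$ (and distance $3$ otherwise). The capacities are then essential: no element-facility can be closed (there would not be enough capacity left), and each element-facility has exactly one spare slot, so after closing $k$ set-facilities the displaced clients are absorbed one per element-facility, at cost $1$ if the element is covered by a chosen set and $3$ otherwise. This gives cost $|T_{i_1}\cup\cdots\cup T_{i_k}|+3\,|U\setminus(T_{i_1}\cup\cdots\cup T_{i_k})|$, hence $|U|$ under the exact-cover promise versus the $(1+2/e-\eps)$ gap otherwise, with $\ell=k$ so that the $f(k)\cdot n^{o(k)}$ lower bound of~\cite{M20} transfers verbatim to $f(\ell)\cdot n^{o(\ell)}$. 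Your YES/NO gap analysis is fine as far as it goes, but without this change of encoding (or some other mechanism making $\ell=\Oh(k)$) the theorem as stated is not proved.
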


\subsubsection{Related work}
A simple example of dual parameterization is given by \textsc{$k$-Independent Set} and \textsc{$\ell$-Vertex Cover}.
From the perspective of polynomial-time algorithms, these problems are equivalent (by setting $\ell = |V(G)| - k$), but they differ greatly when analyzed as parameterized problems: the first one is W[1]-hard while the latter is FPT and admits a polynomial kernel~\cite{CyganFKLMPPS15}.  
Another example originates from the~early work on \textsc{$k$-Dominating Set}, which is a~basic W[2]-complete problem.
When parameterized by $\ell = |V(G)|-k$, the problem is known as \textsc{$\ell$-Nonblocker}.
This name can be interpreted as a~task of choosing $\ell$ vertices so that none is blocked by the others, i.e., each chosen vertex has a~neighbor which has not been chosen.
Under this parameterization, the problem is FPT and admits a~linear kernel~\cite{nonblocker1}.
The best known running time for \textsc{$\ell$-Nonblocker} is $1.96^\ell \cdot n^{\Oh(1)}$~\cite{nonblocker2}.
It is worth noting that \textsc{$\ell$-Nonblocker} is a~special case of $\colmed$ with a graph metric and $F = C = V(G)$, however this analogy works only in a~non-approximate setting.

The Gap Exponential Time Hypothesis was employed for proving parameterized inapproximability by Chalermsook et al.~\cite{from-gapeth},
who presented hardness results for \textsc{$k$-Clique}, \textsc{$k$-Dominating Set}, and \textsc{Densest $k$-Subgraph}.
It was later used to obtain lower bounds for
\textsc{Directed Odd Cycle Transversal}~\cite{doct}, \textsc{Directed Steiner Network}~\cite{chitnis2018parameterized}, \textsc{Planar Steiner Orientation}~\cite{planar-steiner-orient}, and \textsc{Unique Set Cover}~\cite{M20}, among others.
Moreover, Gap-ETH turned out to be a~sufficient assumption to~rule out the existence of an~FPT algorithm for $k$-\textsc{Even Set}~\cite{even-set}.

\section{Preliminaries}
\label{sec:prelims}

\subsubsection{Parameterized complexity and reductions}
A parameterized problem instance is
    created by associating an input instance with an integer parameter \(k\). We
    say that a problem is \emph{fixed parameter tractable} (\FPT{}) if
    it admits an~algorithm solving an~instance \((I, k)\) in time
    \(f(k)\cdot |I|^{\Oh(1)}\), where \(f\) is some computable function.
    Such an algorithm we shall call an \emph{FPT algorithm}.

    To show that a
    problem is unlikely to be \FPT{}, we use \emph{parameterized reductions} analogous
    to those employed in the classical complexity theory (see \cite{CyganFKLMPPS15}). Here, the concept of
    \textsf{W}-hardness replaces the one of \NP-hardness, and we need not only to
    construct an equivalent instance in time \(f(k)\cdot |I|^{\Oh(1)}\), but also to ensure that the
    value of the parameter in the new instance depends only on the value of the
    parameter in the original instance.
In contrast to the \NP-hardness theory, there is a hierarchy of classes $\FPT = \textsf{W[0]} \subseteq \textsf{W[1]} \subseteq \textsf{W[2]} \subseteq \dots$ and these containments are believed to be strict.
If there exists a parameterized reduction transforming a~\textsf{W[t]}-hard problem 
    to another problem \(\Pi \), then the problem \(\Pi \) is
    \textsf{W[t]}-hard as well.
    If a parameterized reduction transforms parameter linearly, i.e., maps an instance \((I_1, k)\) to \((I_2, \Oh(k))\),
    then it also preserves running time of the form $f(k)\cdot |I|^{o(k)}$.
    
In order to prove hardness of parameterized approximation, we use parameterized reductions between \emph{promise problems}.
Suppose we are given an instance \((I_1, k_1)\) of a~minimization problem
with a promise that the answer is at most $D_1$ and we want to find a~solution of value at most $\alpha \cdot D_1$.
Then a reduction should map \((I_1, k_1)\) to such an instance \((I_2, k_2)\) so that the answer to it is at at most $D_2$ and any solution to \((I_2, k_2)\) of value at most $\alpha \cdot D_2$ can be transformed in time  $f(k_1)\cdot |I_1|^{\Oh(1)}$ to a solution to \((I_1, k_1)\) of value at most $\alpha \cdot D_1$.
If an FPT $\alpha$-approximation exists for the latter problem,
then it exists also for the first one.
Again, if we have $k_2 = \Oh(k_1)$, then this relation holds also for algorithms with running time of the form $f(k)\cdot |I|^{o(k)}$.

\subsubsection{Problem definitions} Below we formally introduce the main studied problem and the problems employed in reductions.

\defparproblem{{(Capacitated) co-$\ell$-Median}}
{set of facilities \(F\), set of clients \(C\), metric \(d\) over \(F \cup C\), sequence of capacities \(u_f \in
    \mathbb{Z}_{\geqslant 0}\), integer $\ell$}{$\ell$}
{find a set \(S \subseteq F\) of at most \(|F|-\ell\) facilities
    and a connection assignment \(\phi: C \to F \setminus S\) that satisfies \(\size{\phi^{-1}(f)} \leqslant u_f\) for all $f \in F \setminus S$, and minimizes \(\sum_{c \in C}d(c, \phi(c))\)}

A metric $d: (F \cup C) \times (F \cup C) \to \mathbb{R}_{\geqslant 0}$ is a symmetric function that obeys the triangle inequality $d(x,y) + d(y,z) \ge d(x,z)$ and satisfies $d(x,x) = 0$.
In~the~uncapacitated version we assume that all capacities are equal $|C|$, so any assignment \(\phi: C \to F \setminus S\) is valid.
In the approximate version of \capcolmed we treat the capacity condition \(\size{\phi^{-1}(f)} \leqslant u_f\) as a~hard constraint and we allow only the connection cost \(\sum_{c \in C}d(c, \phi(c))\) to be larger than the optimum.

\defparproblem{{$k$-Independent Set}}{graph $G=(V,E)$, integer $k$}{$k$}
{decide whether there exists a set $S \subseteq V(G)$ of size $k$ such that for all pairs $u,v\in S$ we have $uv \not\in E(G)$}

\defparproblem{{Max $k$-Coverage}}{universe $U$, family of subsets $T_1, \dots, T_n \subseteq U$, integer $k$}{$k$}
{find $k$ subsets $T_{i_1}, \dots, T_{i_k}$ that maximizes $|T_{i_1} \cup \cdots \cup T_{i_k}|$}

\section{Uncapacitated \colmed}

We begin with a simple reduction, showing that the exact problem remains hard under the dual parameterization.

\begin{theorem}\label{thm:w1hard}
The \textsc{co-$\ell$-Median} problem is W[1]-hard.
\end{theorem}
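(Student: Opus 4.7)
The plan is to give a parameterized reduction from \textsc{$k$-Independent Set}, the canonical \W-hard problem already listed among the problem definitions above. Given a graph $G = (V, E)$ and an integer $k$, I construct a \colmed instance with one facility per vertex and one client per edge, namely $F := V$ and $C := E$, and set $\ell := k$. The metric takes only two positive values: $d(e, v) := 1$ when $v$ is an endpoint of the edge $e$, and $d(x, y) := 2$ for every other distinct pair of points in $F \cup C$. All positive distances lie in $\{1, 2\}$ and $1+1 \geqslant 2$, so the triangle inequality holds automatically and $d$ is a genuine metric.

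The correctness argument pivots on the threshold cost $|E|$. Any feasible solution incurs cost at least $|E|$, because each of the $|E|$ clients is at distance at least $1$ from every facility; equality forces each edge-client $e = uv$ to be served by one of its two endpoints, which in turn forces the closed facility set to contain no whole edge of $G$, i.e., to form an independent set. Conversely, given an independent set $T \subseteq V$ of size $k$, closing precisely the vertices of $T$ leaves at least one endpoint of every edge open, so the optimal assignment achieves cost exactly $|E|$. Hence the \colmed optimum equals $|E|$ if and only if $G$ admits an independent set of size $k$.

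The construction runs in polynomial time and the new parameter satisfies $\ell = k$, so this is a valid (in fact linear-parameter) parameterized reduction, which transfers the \W-hardness of \textsc{$k$-Independent Set} to \colmed. The only routine sanity check I would perform when writing the proof is the observation that closing strictly more than $\ell$ facilities can never strictly lower the cost in the uncapacitated setting, so we lose nothing by assuming the closed set has size exactly $\ell$. I do not foresee any conceptual obstacle: the entire hardness is absorbed into the two-valued metric, which encodes the independent-set structure directly, and the gap of $1$ between the "good" and "bad" costs is comfortably preserved in the exact decision setting addressed by this theorem.
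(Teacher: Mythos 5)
Your reduction is correct and is essentially the paper's proof: the paper also reduces from \textsc{$\ell$-Independent Set} with facilities at the vertices and one client per edge at distance $1$ from its endpoints (using the shortest-path metric of the subdivided graph rather than your uniform two-valued metric, which is an immaterial difference), and argues via the same threshold that the optimum equals $|E(G)|$ iff $G$ has an independent set of size $\ell$.
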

\begin{proof}
We reduce from \textsc{$\ell$-Independent Set}, which is W[1]-hard.
We transform a given graph $G$ into a \textsc{co-$\ell$-Median} instance by setting $F = V(G)$,
and placing a client in the middle of each edge.
The distance from a client to both endpoints of its edge is 1 and the shortest paths of such subdivided graph induce the metric~$d$.

If we did not close any facilities, the cost of serving all clients would equal $|E(G)|$.
The same holds if each client has an open facility within distance 1, so the set of closed facilities forms an independent set of vertices in $G$.
On the other hand, if we close a set of facilities containing two endpoints of a single edge then the cost increases.
Therefore the answer to the created instance is $|E(G)|$ if and only if $G$ contains an independent set of size~$\ell$.
\end{proof}

We move on to designing a parameterized approximation scheme for \colmed.
We use notation $d(c,S)$ for the minimum distance between $c$ and any element of the set $S$.
In the uncapacitated setting the connection assignment $\phi_S$ is unique for a given set of closed facilities $S$: each client is assigned to the closest facility outside $S$.
Whenever we consider a solution set $S \subseteq F$,
we mean that this is the set of closed facilities
and denote $cost(S) = \sum_{c \in C} d(c, F \setminus S)$.
We
define $V(f)$ to be the \emph{Voronoi cell} of facility $f$, i.e., the set of clients for which $f$ is the closest facility.
We can break ties arbitrarily and for the sake of disambiguation we assume an~ordering on $F$ and whenever two distances are equal
we choose the facility that comes first in the ordering.

Let $C(f)$ denote the cost of the cell $V(f)$, i.e., $\sum_{c \in V(f)} d(c, f)$.
For a solution $S$ and $f\in S$, $g \not\in S$, we define $C(S,f,g) = \sum d(c,g)$ over $\{c \in V(f)\,|\, \phi_S(c) = g\}$, that is,
the sum of connections of clients that switched from $f$ to $g$.
Note that as long as $f \in F$ remains open, there is no need to change connections of the clients in $V(f)$.
We can express the difference of connection costs after closing $S$ as

$$\Delta(S) = \sum_{f \in S}\sum_{c \in V(f)} d(c, F \setminus S) - \sum_{f \in S} C(f) = \sum_{f \in S}\sum_{g \in F \setminus S} C(S,f,g) - \sum_{f \in S} C(f).$$

We have $cost(S) = \sum_{f\in F} C(f) + \Delta(S)$, therefore the optimal solution closes set $S$ of size $\ell$ minimizing $\Delta(S)$.

The crucial observation is that any~small set of closed facilities $S$ can be associated with a~small set of open facilities that are relevant for serving the clients from $\bigcup_{f\in S} V(f)$.
Intuitively, if $C(S, f,g) = \Oh(\frac{\eps}{\ell^2}) \cdot cost(S)$ for all $f\in S$, then we can afford replacing $\ell$ such facilities $g$ with others that are not too far away.

\begin{definition}\label{def:support}
The $\eps$-support of a~solution $S \subseteq F$, $|S|=\ell$, referred to as $\support(S)$, is the set of all open facilities $g$ (i.e., $g \not\in S$) satisfying one of the following conditions:
\begin{enumerate}
\item there is $f \in S$ such that $g$ minimizes distance $d(f,g)$ among all open facilities,
    \item there is $f \in S$ such that $C(S, f,g) > \frac{\eps}{6\ell^2} \cdot cost(S)$.
\end{enumerate}

We break ties in condition (1) according to the same rule as in the definition of $V(f)$, so there is a single $g$ satisfying condition (1) for each~$f$.
\end{definition}

\begin{lemma}\label{lem:support-size}
For a solution $S$ of size $\ell$, we have $|\support(S)| \le 6\cdot \ell^3/\eps + \ell$.
\end{lemma}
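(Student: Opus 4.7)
The plan is a direct counting argument, separating the two conditions in the definition of $\support(S)$.

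First, I would bound the number of facilities contributed by condition (1). By the tie-breaking rule built into the definition, for each $f \in S$ there is a unique open facility $g$ minimizing $d(f,g)$. So condition (1) contributes at most $|S| = \ell$ facilities in total; this accounts for the additive $+\ell$ in the bound.

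Next, I would bound the contribution of condition (2) by a simple averaging argument. The key identity is that for each $f \in S$, the clients of $V(f)$ get redistributed among the open facilities, so
\[
\sum_{g \in F \setminus S} C(S,f,g) \;=\; \sum_{c \in V(f)} d(c, F \setminus S) \;\le\; cost(S).
\]
Therefore, for a fixed $f \in S$, the number of open facilities $g$ with $C(S,f,g) > \tfrac{\varepsilon}{6\ell^2} \cdot cost(S)$ is strictly less than $\tfrac{6\ell^2}{\varepsilon}$. Summing over all $f \in S$, condition (2) contributes at most $\ell \cdot \tfrac{6\ell^2}{\varepsilon} = \tfrac{6\ell^3}{\varepsilon}$ facilities.

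Adding the two bounds gives $|\support(S)| \le \tfrac{6\ell^3}{\varepsilon} + \ell$, as claimed. There is no real obstacle here; the only minor subtlety is making sure the inequality $\sum_{g} C(S,f,g) \le cost(S)$ is justified from the definitions of $cost(S)$ and $C(S,f,g)$, which follows immediately from the fact that the Voronoi cells $\{V(f)\}_{f \in F}$ partition $C$.
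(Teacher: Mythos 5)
Your proof is correct and follows essentially the same route as the paper's: at most $\ell$ facilities from condition (1) via the tie-breaking rule, and an averaging argument using $\sum_{g} C(S,f,g) \le cost(S)$ for each fixed $f \in S$ to get at most $6\ell^2/\eps$ facilities per $f$, hence $6\ell^3/\eps$ in total from condition (2).
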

\begin{proof}
We get at most $\ell$ facilities from condition (1).
Since the sets of clients being served by different $g \in F \setminus S$ are disjoint and $\sum_{g \in F} C(S, f,g) \le cost(S)$,
we obtain at most $6\cdot \ell^3/\eps$ facilities from condition (2).
\end{proof}

Even though we will not compute the set $\support(Opt)$ directly, we are going to work with partitions $F = A\, \uplus\, B$, such that $Opt \subseteq A$ and $\support(Opt) \subseteq B$.
Such a partition already gives us a~valuable hint.
By looking at each facility $f \in A$ separately, we can deduce that if $f \in Opt$ and some other facility $g$ belongs to $A$ (so it cannot belong to $\support(Opt)$) then in some cases $g$ must also belong to $Opt$.
More precisely, if $g \in A$ is closer to $f$ than the closest facility in $B$, then $g$ must be closed, as otherwise it would violate condition (1).
Furthermore, suppose that $h \in A$ serves clients from $V(f)$ (assuming $f$ is closed) of total cost at least $\frac{\eps}{6\ell^2} \cdot cost(S)$.
If we keep $h$ open and close some other facilities, this relation is preserved and having $h$ in $A$ violates condition (2).
We formalize this idea with the notion of \emph{required sets},
given by the following procedure, supplied additionally with a~real number~$D$, which can be regarded as the guessed value of $cost(Opt)$.

\begin{algorithm}
\caption{\textsc{Compute-required-set$(A,B,f,\eps,\ell,D)$} (assume $f \in A$ and $A \cap B =\emptyset$)}
\label{alg:of}
\begin{algorithmic}[1]
\STATE $s_f \leftarrow \min_{y \in B} d(f,y)$
\STATE $R_f \leftarrow \{g \in A\, \mid \, d(f,g) < s_f\}$ \text{ (including} $f$)
\WHILE {$\exists {g \in A}\, :\, C(R_f, f, g) > \frac{\eps}{3\ell^2} \cdot D$}
\STATE $R_f \leftarrow R_f \cup \{g\}$
\ENDWHILE
\RETURN $R_f$
\end{algorithmic}
\end{algorithm}

\begin{lemma}\label{lem:required-sets}
Let $Opt \subseteq F$ be the optimal solution.
Suppose $F = A\, \uplus\, B$, $f\in Opt \subseteq A$, $\support(Opt) \subseteq B$, and $cost(Opt) \le 2D$.
Then the set $R_f$ returned by \textsc{Compute-required-set$(A,B,f,\eps,\ell,D)$} satisfies $R_f \subseteq Opt$.
\end{lemma}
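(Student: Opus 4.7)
The plan is to establish $R_f \subseteq Opt$ as a loop invariant maintained throughout Algorithm~\ref{alg:of}: both the initialization in line~2 and every iteration of the while loop should only add facilities that already lie in $Opt$. The base case will rely on condition~(1) of Definition~\ref{def:support}, and the inductive step will rely on condition~(2) together with a monotonicity comparison between $C(R_f,f,g)$ and $C(Opt,f,g)$.

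For the base case, $f \in R_f \cap Opt$ trivially. For any other $g$ added in line~2 I would argue by contradiction: suppose $g \in A$ with $d(f,g) < s_f$ but $g \notin Opt$. Then $g$ is open in the optimal solution, so the open facility closest to $f$ under $Opt$ sits at distance at most $d(f,g) < s_f$. By condition~(1) of Definition~\ref{def:support} applied to $f \in Opt$, this closest open facility lies in $\support(Opt) \subseteq B$, contradicting $s_f = \min_{y \in B} d(f,y)$.

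For the inductive step, I would assume the current $R_f$ satisfies $R_f \subseteq Opt$ and consider a candidate $g \in A$ added because $C(R_f, f, g) > \frac{\eps}{3\ell^2}\cdot D$. Supposing $g \notin Opt$ for contradiction, the crucial claim I need is
\[
C(Opt, f, g) \;\ge\; C(R_f, f, g).
\]
This holds because $R_f \subseteq Opt$ implies $F \setminus Opt \subseteq F \setminus R_f$; so if $c \in V(f)$ satisfies $\phi_{R_f}(c) = g$, then $g$ is closest to $c$ among all facilities in $F \setminus R_f$, and since $g \notin Opt$ it remains open under $Opt$, while every other facility in $F \setminus Opt$ is already in $F \setminus R_f$ and hence at distance $\ge d(c,g)$. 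Therefore $\phi_{Opt}(c) = g$ as well, giving the inequality. Combined with $cost(Opt) \le 2D$, this yields $C(Opt, f, g) > \frac{\eps}{6\ell^2}\cdot cost(Opt)$, triggering condition~(2) of Definition~\ref{def:support} and placing $g$ in $\support(Opt) \subseteq B$, contradicting $g \in A$.

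The main obstacle is precisely this monotonicity comparison: one has to verify that shifting from the smaller closed set $R_f$ to the larger closed set $Opt$ cannot divert a client in $V(f)$ away from $g$, which is exactly what breaks if $g$ itself lies in $Opt$ and is what forces the case analysis on $g \in Opt$ versus $g \notin Opt$. Once this comparison is in hand, the thresholds $\frac{\eps}{3\ell^2}$ used by the algorithm and $\frac{\eps}{6\ell^2}$ used in Definition~\ref{def:support} mesh through exactly the slack factor $2$ between $D$ and $cost(Opt)$, which is how the numerical constants in Algorithm~\ref{alg:of} are calibrated.
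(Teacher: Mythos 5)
Your proposal is correct and follows essentially the same route as the paper: initialize $R_f$ using condition~(1) of Definition~\ref{def:support} against $s_f=\min_{y\in B}d(f,y)$, then maintain the invariant $R_f\subseteq Opt$ via the monotonicity $C(Opt,f,g)\ge C(R_f,f,g)$ for $g\notin Opt$ and the chain $C(Opt,f,g)>\frac{\eps}{3\ell^2}D\ge\frac{\eps}{6\ell^2}cost(Opt)$, forcing $g\in\support(Opt)\subseteq B$, a contradiction. The only difference is that you spell out the justification of the monotonicity step (clients of $V(f)$ assigned to $g$ under $F\setminus R_f$ remain assigned to $g$ under $F\setminus Opt$ since $g$ stays open and the candidate set only shrinks), which the paper asserts without elaboration.
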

\begin{proof}
Let $y_f$ be the facility in $B$ that is closest to $f$.
Due to condition (1) in Definition~\ref{def:support}, all facilities $g \in A$ satisfying $d(f,g) < d(f,y_f)$ must be closed in the optimal solution, so we initially add them to~$R_f$.
We keep invariant $R_f \subseteq Opt$, so for any $g \in F \setminus Opt$ it holds that $C(Opt, f, g) \ge C(R_f, f, g)$.
Whenever there is $g \in A$ satisfying $C(R_f, f, g) > \frac{\eps}{3\ell^2} \cdot D$, we get
$$C(Opt, f, g) \ge C(R_f, f, g) > \frac{\eps}{3\ell^2} \cdot D \ge \frac{\eps}{6\ell^2} \cdot cost(Opt).$$
Since $g$ does not belong to $\support(Opt) \subseteq B$,
then by condition (2) it must be closed.
Hence, adding $g$ to $R_f$ preserves the invariant.
\end{proof}

Before proving the main technical lemma,
we need one more simple observation, in which we exploit the fact that the function $d$ is indeed a~metric.

\begin{lemma}\label{lem:reroute}
Suppose $c \in V(f_0)$ and $d(f_0,f_1) \le d(f_0,f_2)$.
Then $d(c,f_1) \le 3\cdot d(c,f_2)$.
\end{lemma}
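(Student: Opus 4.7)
The plan is a short two-step triangle-inequality argument, combined with the defining property of Voronoi cells.

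First I would use the triangle inequality to pass from $c$ to $f_1$ through $f_0$:
\[
d(c, f_1) \le d(c, f_0) + d(f_0, f_1).
\]
By hypothesis $d(f_0, f_1) \le d(f_0, f_2)$, so applying the triangle inequality once more to bound $d(f_0, f_2)$ by $d(f_0, c) + d(c, f_2)$ gives
\[
d(c, f_1) \le 2\,d(c, f_0) + d(c, f_2).
\]

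Second, I would invoke the fact that $c \in V(f_0)$, i.e.\ that $f_0$ is the closest facility to $c$, which yields $d(c, f_0) \le d(c, f_2)$. Plugging this into the previous inequality gives $d(c, f_1) \le 3\,d(c, f_2)$, as required.

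There is essentially no obstacle here: the lemma is the standard ``reassignment'' estimate used throughout local-search analyses for clustering. The only subtlety worth noting is that symmetry of $d$ is needed when rewriting $d(f_0, c) = d(c, f_0)$, but this is part of the metric assumption stated in the preliminaries.
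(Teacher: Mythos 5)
Your proof is correct and follows exactly the same route as the paper: triangle inequality via $f_0$, the hypothesis $d(f_0,f_1)\le d(f_0,f_2)$, a second application of the triangle inequality, and the Voronoi property $d(c,f_0)\le d(c,f_2)$. Nothing is missing.
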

\begin{proof}
An illustration is given in Figure~\ref{fig:voronoi}.
Since $c$ belongs to the Voronoi cell of~$f_0$, we~have $d(c,f_0) \le d(c,f_2)$.
By the triangle inequality 
\begin{align*}
d(c,f_1) \le & d(c,f_0) + d(f_0,f_1) \le d(c,f_0) + d(f_0,f_2) \le \\
& d(c,f_0) + d(c,f_0) + d(c,f_2) \le 3\cdot d(c,f_2).
\end{align*}
\end{proof}

\begin{figure}
    \centering
    \includegraphics[scale=0.6]{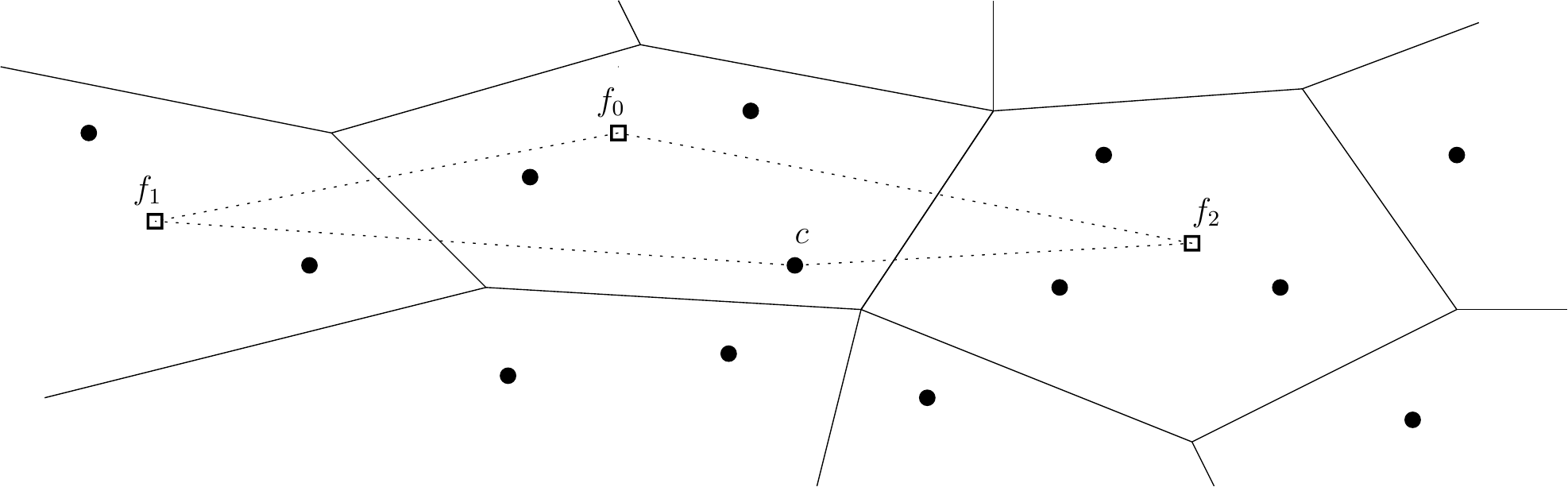}
    \caption{An example of a Voronoi diagram with squares representing facilities and dots being clients.
    Lemma \ref{lem:reroute} states that even if $d(c,f_1) > d(c,f_2)$ for $c \in V(f_0)$ and $d(f_0,f_1) \le d(f_0,f_2)$, then $d(c,f_1)$ cannot be larger than $3\cdot d(c,f_2)$. 
    }
    \label{fig:voronoi}
\end{figure}

\begin{lemma}\label{lem:given-partition}
Suppose we are given a partition $F = A\, \uplus\, B$, such that $Opt \subseteq A$, $\support(Opt) \subseteq B$, and a number $D \in \mathbb{R}_{> 0}$, such that $cost(Opt) \in [D,2D]$.
Then we can find a solution $S \subseteq A$, such that $cost(S) \le (1+\eps)\cdot cost(Opt)$, in polynomial time.
\end{lemma}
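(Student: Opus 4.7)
The plan is to build the closed set greedily, using \textsc{Compute-required-set} as the subroutine that tells us which other facilities must be closed alongside each candidate. Maintain a set $S \subseteq A$, initially empty. In each iteration, for every $f \in A \setminus S$ invoke \textsc{Compute-required-set}$(A, B, f, \eps, \ell, D)$ to obtain $R_f$, restrict attention to those $f$ with $|R_f \cup S| \le \ell$, and select the one minimizing the ratio
\[
\rho(f) = \frac{cost(S \cup R_f) - cost(S)}{|R_f \setminus S|}.
\]
Update $S \leftarrow S \cup R_f$ and repeat until $|S| = \ell$. Polynomial running time is immediate: at most $\ell$ outer iterations, each scanning $A$ and invoking the subroutine, whose while loop enlarges $R_f$ at most $|F|$ times.

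For the approximation guarantee I would maintain the invariant $S \subseteq Opt$. By Lemma~\ref{lem:required-sets}, for every $f \in Opt \setminus S$ the set $R_f$ is contained in $Opt$, so a valid candidate always exists while $|S| < \ell$ and the invariant is preserved after the update. A standard set-cover-style averaging argument shows that at each iteration the chosen $f$ has marginal cost per newly closed facility at most the average marginal cost of the remaining $Opt$-facilities, up to a small per-step slack. Telescoping over $\ell$ iterations, the total cost of the returned $S$ is bounded by $cost(Opt)$ plus the aggregated slack. The slack is controlled by the termination condition of the while loop: each facility $g \in A \setminus R_f$ satisfies $C(R_f,f,g) \le \frac{\eps}{3\ell^2} D$, i.e.\ carries only that much ``hidden'' reassignment cost. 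Summing over the $\le \ell$ iterations and $\le \ell$ candidate $g$'s per iteration, and invoking Lemma~\ref{lem:reroute} to pay a factor of at most $3$ whenever the greedy actually reroutes a client to a facility farther than the one $Opt$ used, the aggregated overpayment is at most $\eps D \le \eps \cdot cost(Opt)$.

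The main obstacle will be making the charging precise. The per-pair slack $\frac{\eps}{3\ell^2} D$ has to be attributed to the cost incurred by greedy rerouting, while the factor-$3$ loss from Lemma~\ref{lem:reroute} must be applied only to that slack and not to the bulk of $cost(Opt)$. I expect this to require a client-level accounting: for each client $c \in V(f)$ with $f \in S$, compare the greedy reassignment target against the $Opt$ target and decompose the excess into a main term absorbed by the averaging argument and an error term bounded by the slack. Once this is pinned down, the conclusion $cost(S) \le (1+\eps)\cdot cost(Opt)$ follows directly, and $S$ is by construction a valid solution of size at most $\ell$ contained in $A$.
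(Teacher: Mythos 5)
Your plan has a genuine gap at its core: the invariant $S \subseteq Opt$ cannot be maintained. Your adaptive greedy picks the facility $f$ minimizing the ratio $\rho(f)$ over \emph{all} candidates in $A$, and nothing forces this minimizer to lie in $Opt$; Lemma~\ref{lem:required-sets} only guarantees $R_f \subseteq Opt$ for $f \in Opt$, so as soon as a non-$Opt$ facility is chosen (which you cannot detect or prevent) the invariant, and with it the justification that a valid candidate always exists, collapses. Indeed, maintaining $S \subseteq Opt$ would essentially require knowing $Opt$, and the exact problem is W[1]-hard (Theorem~\ref{thm:w1hard}). The failure is not cosmetic: your charging argument measures \emph{adaptive} marginals $cost(S \cup R_f) - cost(S)$, and the cost of closing facilities behaves supermodularly --- closing more facilities only inflates later marginals --- so once $S$ contains facilities outside $Opt$, the set-cover-style averaging step (``the chosen $f$ has marginal per newly closed facility at most the average over the remaining $Opt$-facilities'') has no justification, and the telescoping bound by $cost(Opt)$ plus slack does not go through. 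This adaptivity is precisely what the whole construction of $\support$ and required sets is designed to eliminate.

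The paper's proof avoids the issue by being non-adaptive. For every $f \in A$ it computes a selection-independent surrogate $m_f = \sum_{c \in V(f)} d(c, F \setminus R_f) - C(f)$ (with $m_f = \infty$ if $|R_f| > \ell$), and simply takes the $\ell$ facilities of smallest $m_f$. The comparison to $Opt$ is then trivial: $\sum_{f \in S} m_f \le \sum_{f \in Opt} m_f$, and for $f \in Opt$ Lemma~\ref{lem:required-sets} gives $R_f \subseteq Opt$, hence $m_f \le \sum_{c \in V(f)} d(c, F\setminus Opt) - C(f)$, so the surrogate total is at most $cost(Opt)$. Only afterwards does the interaction between chosen facilities get paid for: for $g \in S \setminus R_f$ the while-loop condition guarantees $C(R_f,f,g) \le \frac{\eps}{3\ell^2}D$ and the existence of an open $y \in B$ with $d(f,y) \le d(f,g)$, so Lemma~\ref{lem:reroute} reroutes those clients at a factor $3$, costing at most $\frac{\eps}{\ell^2}D$ per pair and $\eps D$ overall. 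Your second half (the $\frac{\eps}{3\ell^2}D$ slack plus the factor-$3$ rerouting) is the right ingredient and matches the paper, but it must be applied to a selection-independent benchmark; to repair your write-up, replace the adaptive ratio greedy and the $S \subseteq Opt$ invariant with this surrogate-cost comparison.
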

\begin{proof}
We compute the set $R_f = \textsc{Compute-required-set}(A,B,f,\eps,\ell,D)$
for each facility $f \in A$.
The subroutine from Algorithm~\ref{alg:of} clearly runs in polynomial time.
Furthermore, for each $f \in A$ we
compute its \emph{marginal cost} of closing
$$m_f = \sum_{c \in V(f)} d(c, F \setminus R_f) - C(f).$$

If $|R_f| > \ell$ then $f$ cannot belong to any solution consistent with the partition $(A,B)$ and in this case we set $m_f = \infty$.
Since the marginal cost depends only on $f$, we can greedily choose $\ell$ facilities from $A$ that minimize $m_f$ -- we refer to this set as $S$.

We first argue that $\sum_{f \in F} C(f) + \sum_{f \in S} m_f$ is at most the cost of the optimal solution.
By greedy choice we have that
$\sum_{f \in S} m_f \le \sum_{f \in Opt} m_f$.
We have assumed $cost(Opt) \le 2D$ so by Lemma~\ref{lem:required-sets} we get that if $f \in Opt$, then $R_f \subseteq Opt$.
The set of facilities $F \setminus Opt$ that can serve clients from $V(f)$ is a~subset of~$F \setminus R_f$ and the distances can only increase,
thus for $f \in Opt$ we have $ m_f \le \sum_{c \in V(f)} d(c, F \setminus Opt) - C(f)$.
We conclude that $\sum_{f \in F} C(f) + \sum_{f \in S} m_f$ is upper bounded by
\begin{equation}\label{eq:marginal}
\sum_{f \in F} C(f) + \sum_{f \in Opt} \sum_{c \in V(f)} d(c, F \setminus Opt) - \sum_{f \in Opt} C(f) = cost(Opt).
\end{equation}

The second argument is that after switching benchmark from the marginal cost to the true cost of closing $S$, we will additionally pay at most $\eps D$.
These quantities differ when for a facility $f\in S$ we have `connected' some clients from $V(f)$ to $g \in S \setminus R_f$ when computing $m_f$.
More precisely, we want to show that for each $f\in S$ we have

\begin{equation}\label{eq:eps-d}
\sum_{c \in V(f)} d(c, F \setminus S) \le \sum_{c \in V(f)} d(c, F \setminus R_f) + \frac{\eps D}{\ell}.
\end{equation}

By the construction of $R_f$, whenever $g \in S \setminus R_f$ we are guaranteed that there exists a~facility $y \in B$ such that $d(f,g) \ge d(f,y)$ and, moreover, $C(R_f, f, g) \le \frac{\eps}{3\ell^2} \cdot D$.
We can reroute all such clients $c$ to the closest open facility and we know it is not further than $d(c,y)$.
By Lemma~\ref{lem:reroute} we know that $d(c,y) \le 3\cdot d(c,g)$ so rerouting those clients costs at most $\frac{\eps}{\ell^2} \cdot D$.
Since there are at most $\ell$ such facilities  $g \in S \setminus R_f$, we have proved Formula~(\ref{eq:eps-d}).
Combining this with bound from~(\ref{eq:marginal}) implies that $cost(S) \le cost(Opt) + \eps D$.
As we have assumed $D \le cost(Opt)$, the~claim follows.
\end{proof}

In order to apply Lemma~\ref{lem:given-partition}, we need to find
a partition $F = A \uplus B$ satisfying $Opt \subseteq A$ and $\support(Opt) \subseteq B$.
Since $\support(Opt) = \Oh(\ell^3 / \eps)$,
we can do this via randomization.
Consider tossing a biased coin for each facility independently: with probability $\frac{\eps}{\ell^3}$ we place it in $A$, and with remaining probability in $B$.
The probability of obtaining a partitioning satisfying $Opt \subseteq A$ and $\support(Opt) \subseteq S$ equals $(\frac{\eps}{\ell^3})^\ell$ times $(1-\frac{\eps}{\ell^3})^{O(\frac{\ell^3}{\eps})} = \Omega(1)$.
Therefore $2^{O(\ell\log(\ell/\eps))}$ trials give a~constant probability of sampling a correct partitioning.
In order to derandomize this process, we take advantage of the following construction which is a~folklore corollary from
the framework of {color-coding}~\cite{color-coding}.
As we are not aware of any self-contained proof of this claim in the literature, we provide it for completeness.

\begin{lemma}
\label{lem:splitter}
For a set $U$ of size $n$,
there exists a family $\mathcal{H}$ of partitions $U = A\, \uplus\, B$ such that $|\mathcal{H}| = 2^{O(\ell\log(\ell+r))}\log n$ and for every pair of disjoint sets $A_0,B_0 \subseteq U$ with $|A_0| \le \ell$, $|B_0| \le r$, there is $(A,B) \in \mathcal{F}$ satisfying $A_0 \subseteq A, B_0 \subseteq B$.
The~family $\mathcal{H}$ can be constructed in time $2^{O(\ell\log(\ell+r))}n\log n$.
\end{lemma}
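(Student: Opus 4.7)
The plan is to derive the splitter family from a perfect hash family on the universe, composed with a brute-force enumeration of the subsets of its (small) image. Concretely, I would first invoke the classical explicit perfect-hash-family construction from the color-coding toolbox: for every $n$ and $k$, one can compute in time $k^{O(\log k)} \cdot n \log n$ a family $\mathcal{F}$ of at most $k^{O(\log k)} \log n$ functions $h \colon U \to [k]$ such that for every $T \subseteq U$ with $|T| \le k$, at least one $h \in \mathcal{F}$ is injective on $T$. I would apply this with $k = \ell + r$.

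Given $\mathcal{F}$, I would build $\mathcal{H}$ by enumeration: for each $h \in \mathcal{F}$ and each $X \subseteq [\ell+r]$ of size exactly $\ell$, add the partition $(A,B) = (h^{-1}(X),\, U \setminus h^{-1}(X))$ to $\mathcal{H}$. To verify the covering property, take any disjoint $A_0, B_0 \subseteq U$ with $|A_0| \le \ell$ and $|B_0| \le r$, and let $T = A_0 \cup B_0$, so $|T| \le \ell + r$. Some $h \in \mathcal{F}$ is injective on $T$, which forces $h(A_0) \cap h(B_0) = \emptyset$; one can then pick $X$ to be any $\ell$-subset of $[\ell+r]$ satisfying $h(A_0) \subseteq X \subseteq [\ell+r] \setminus h(B_0)$. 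Such an $X$ exists because $|h(A_0)| \le \ell$ and $|[\ell+r] \setminus h(B_0)| \ge \ell + r - r = \ell$. The associated partition then satisfies $A_0 \subseteq A$ and $B_0 \subseteq B$, as required. The size bound follows by a direct count: $|\mathcal{H}| \le |\mathcal{F}| \cdot \binom{\ell+r}{\ell} = (\ell+r)^{O(\log(\ell+r))} \cdot (\ell+r)^{O(\ell)} \cdot \log n = 2^{O(\ell \log(\ell+r))} \log n$, and the total construction time is of the same order up to a factor of $n$.

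The real obstacle is accessing a perfect hash family of size $k^{O(\log k)} \log n$ rather than the more elementary $2^{O(k)} \log n$. The plan is to either cite the standard Naor--Schulman--Srinivasan construction (which is the usual ``folklore color-coding'' ingredient), or, for self-containment, replay a short derivation via almost $(\ell+r)$-wise independent bit sequences combined with an iterative range-reduction argument. If one only had the naive $2^{O(k)} \log n$ perfect hash bound, composing with the $\binom{\ell+r}{\ell}$ enumeration would yield a splitter of size $2^{O(\ell + r) + O(\ell \log(\ell+r))} \log n$, which loses the target exponent as soon as $r \gg \ell \log(\ell+r)$; hence accessing the sharper $k^{O(\log k)}$ bound is the critical step.
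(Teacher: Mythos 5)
Your high-level plan---compose a small hash family with a brute-force enumeration of two-colorings of its range---is the same route the paper takes, and your covering argument and the counting of the enumeration step are fine. The genuine gap is the hash-family bound you rely on: there is no family of functions $h\colon U\to[k]$ of size $k^{O(\log k)}\log n$ that is injective on every $k$-subset. Hashing onto a range of size exactly $k$ is intrinsically expensive: a uniformly random function into $[k]$ is injective on a fixed $k$-set only with probability $k!/k^k\approx e^{-k}$, and the Fredman--Koml\'os lower bound shows that \emph{any} $(n,k,k)$-perfect hash family must have size exponential in $k$ (roughly $e^{k(1-o(1))}$ once $n$ is large). Correspondingly, the Naor--Schulman--Srinivasan construction you intend to cite has size $e^{k}k^{O(\log k)}\log n$, not $k^{O(\log k)}\log n$, and no ``range-reduction'' derivation can do better, since it would contradict the lower bound. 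As you yourself observe, substituting the true $2^{O(k)}\log n$ bound with $k=\ell+r$ destroys the target exponent as soon as $r\gg\ell\log(\ell+r)$ (and in the application $r=\Theta(\ell^3/\eps)$), so the proposal as written does not establish the lemma.

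The repair is to relax the range rather than the family size: what is available in polynomial-in-$k$ size is an $(n,k,k^2)$-splitter, i.e., a family of $k^{O(1)}\log n$ functions $f\colon[n]\to[k^2]$ such that every $k$-set is mapped injectively by some $f$, constructible in time $k^{O(1)}n\log n$ (Theorem 5.16 of \cite{CyganFKLMPPS15}, which is exactly what the paper invokes). Run your argument with this family for $k=\ell+r$, and enumerate, instead of the $\ell$-subsets of $[\ell+r]$, all subsets of the range $[(\ell+r)^2]$ of size at most $\ell$ (equivalently, all $0/1$-colorings of the range with at most $\ell$ zeros); there are at most $(\ell+r)^{2\ell}=2^{O(\ell\log(\ell+r))}$ of them, so the final size and construction-time bounds are unaffected. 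Your covering argument then goes through verbatim: choose $f$ injective on $A_0\cup B_0$ and a subset of the range containing $f(A_0)$ and disjoint from $f(B_0)$. With this modification your proof coincides with the paper's.
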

\begin{proof}
Let use denote $[n] = \{1, 2, \dots, n\}$ and identify $U = [n]$.
We rely on the following theorem: for any integers $n,k$
there exists a family $\mathcal{F}$ of functions $f: [n] \rightarrow [k^2]$, such that  $|\mathcal{F}| = k^{\Oh(1)}\log n$ and for each $X \subseteq [n]$ of size $k$ there is a function $f \in  \mathcal{F}$ which is injective on $X$; moreover,  $\mathcal{F}$ can be constructed in~time $k^{\Oh(1)}n\log n$~\cite[Theorem 5.16]{CyganFKLMPPS15}.

We use this construction for $k= \ell + r$.
Next, consider the family $\mathcal{G}$ of all functions $g: [(\ell+r)^2] \rightarrow \{0,1\}$ such that $|g^{-1}(0)| \le \ell$.
Clearly, $|\mathcal{G}| \le (\ell + r)^{2\ell}$.
The family $\mathcal{H}$ is given by taking all compositions $\{h = g \circ f\, \mid \, g \in \mathcal{G}, f \in \mathcal{F}\}$
and setting $(A_h,B_h) = (h^{-1}(0), h^{-1}(1))$.
We have $|\mathcal{H}| \le |\mathcal{G}| \cdot |\mathcal{F}| = 2^{O(\ell\log(\ell+r))}\log n$.
Let us consider any pair of disjoint subsets $A_0,B_0 \subseteq [n]$ with $|A_0| \le \ell$, $|B_0| \le r$.
There exists $f \in  \mathcal{F}$ injective on $A_0 \cup B_0$
and $g \in  \mathcal{G}$ that maps $f(A_0)$ to 0 and $f(B_0)$ to 1, so
 $A_0 \subseteq A_{g \circ f}, B_0 \subseteq B_{g \circ f}$.
\end{proof}

\begin{theorem}
The \colmed problem admits a deterministic $(1+\eps)$-approximation algorithm running in time $2^{\Oh(\ell\log(\ell/\eps))}\cdot n^{\Oh(1)}$ for any constant $\eps > 0$.
\end{theorem}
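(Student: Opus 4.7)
The plan is to combine the splitter of Lemma~\ref{lem:splitter} with a brute-force enumeration of the order of magnitude of $cost(Opt)$, and then invoke Lemma~\ref{lem:given-partition} on each resulting triple $(A, B, D)$, returning the best solution found. This is exactly a derandomization of the randomized partitioning sketched in the discussion preceding Lemma~\ref{lem:splitter}.

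First, I would apply Lemma~\ref{lem:splitter} to the set $F$ with parameters $\ell$ and $r = \lceil 6\ell^3/\eps \rceil + \ell$; by Lemma~\ref{lem:support-size} this $r$ is a valid upper bound on $|\support(Opt)|$. The resulting family $\mathcal{H}$ of partitions $F = A \uplus B$ has cardinality $2^{\Oh(\ell\log(\ell + \ell^3/\eps))}\log n = 2^{\Oh(\ell\log(\ell/\eps))}\log n$ and is guaranteed to contain at least one partition with $Opt \subseteq A$ and $\support(Opt) \subseteq B$. The construction itself runs in $2^{\Oh(\ell\log(\ell/\eps))} \cdot n\log n$ time, which fits within the target budget.

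Second, I would enumerate candidate values of $D$. Since $cost(Opt)$ is a sum of at most $|C|$ distances drawn from the finite set $\{d(c,f) \mid c \in C, f \in F\}$, it suffices to try all values of the form $2^i \cdot d(c,f)$ for $(c,f) \in C \times F$ and $i \in \{0, 1, \dots, \lceil \log |C| \rceil\}$. This yields $n^{\Oh(1)}$ candidates and, as long as $cost(Opt) > 0$, at least one of them lies in $[cost(Opt)/2,\, cost(Opt)]$. The degenerate case $cost(Opt) = 0$ can be handled by a separate routine that greedily closes facilities whose removal does not increase the assignment cost; if $\ell$ such removals are possible, output that set.

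For every pair $(A, B) \in \mathcal{H}$ and every candidate $D$, I would execute the polynomial-time procedure from Lemma~\ref{lem:given-partition}, check that the returned $S$ has $|S| = \ell$ and $S \subseteq A$, and record $cost(S)$. Outputting the cheapest $S$ encountered is correct, because for the `lucky' partition and a correctly bracketed $D$ the lemma certifies $cost(S) \le (1+\eps)\cdot cost(Opt)$. The total running time is $|\mathcal{H}| \cdot n^{\Oh(1)}  = 2^{\Oh(\ell\log(\ell/\eps))} \cdot n^{\Oh(1)}$, matching the claim. The only subtlety I anticipate is making the $D$-enumeration rigorous without assuming a bounded numerical model for the distances; anchoring the candidates to individual $d(c,f)$ values (rather than to an absolute range) avoids any dependence on bit-lengths and keeps the enumeration polynomial in $|F| + |C|$.
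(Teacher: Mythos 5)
Your proposal is correct and follows essentially the same route as the paper: color-coding splitter for the partition $F = A \uplus B$ with $r = \Oh(\ell^3/\eps)$, then Lemma~\ref{lem:given-partition} on each triple $(A,B,D)$, returning the cheapest solution. The only (harmless) difference is how $D$ is bracketed: the paper runs a polynomial-time $\Oh(1)$-approximation for \kmed and covers the resulting interval with constantly many dyadic intervals, while you enumerate $n^{\Oh(1)}$ candidates of the form $2^i\cdot d(c,f)$, which is equally valid and stays within the claimed running time.
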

\begin{proof}
We apply Lemma~\ref{lem:splitter} for $U=F$, $\ell$ being the parameter, and $r =  6\cdot \ell^3/\eps + \ell$, which upper bounds the size of $\support(Opt)$ (Lemma~\ref{lem:support-size}).
The family $\mathcal{H}$ contains a~partition $F = A \uplus B$ satisfying $Opt \subseteq A$ and $\support(Opt) \subseteq B$.
Next, we~need to find $D$, such that $cost(Opt) \in [D,2D]$.
We begin with any polynomial-time $\alpha$-approximation algorithm for \kmed ($\alpha = \Oh(1)$) to get an interval $[X, \alpha X]$, which contains $cost(Opt)$.
We cover this interval with a~constant number of intervals of the form $[X, 2X]$ and one of these provides a~valid value of $D$.
We~invoke the algorithm from Lemma~\ref{lem:given-partition} for each such triple $(A,B,D)$ and return a~solution with the smallest cost.
\end{proof}

\section{Hardness of \capcolmed}

In this section we show that, unlike \textsc{co-$\ell$-Median}, its capacitated counterpart does not admit a parameterized approximation scheme.

We shall reduce from the \textsc{Max $k$-Coverage} problem,
which was also the source of lower bounds for \kmed in the polynomial-time regime~\cite{guha1999greedy} and when parameterized by $k$~\cite{kmedian-fpt}.
However, the latter reduction is not longer valid when we consider a~different parameterization for~\kmed, as otherwise we could not obtain Theorem~\ref{thm:uncap}.
Therefore, we need to design a~new reduction, that exploits the capacity constraints and translates the parameter $k$ of~an~instance of~\textsc{Max $k$-Coverage} into the parameter $\ell$ of
an~instance of \capcolmed.
To the best of our knowledge, this is the first hardness result in which the capacities play a role and allow us to obtain a~better lower bound.

We rely on the following strong hardness result.
Note that this result is a~strengthening of~\cite{kmedian-fpt}, which only rules out $f(k) \cdot n^{k^{poly(1/\delta)}}$-time algorithm. This suffices to rule out a~parameterized approximation scheme for \capcolmed, but not for a strong running time lower bound of the form $f(\ell) \cdot n^{o(\ell)}$.

\begin{theorem}[\cite{M20}] \label{thm:max-coverage}
Assuming Gap-ETH, there is no $f(k) \cdot n^{o(k)}$-time algorithm that can approximate \textsc{Max $k$-Coverage} to within a factor of $(1 - 1/e + \delta)$ for any function $f$ and any constant $\delta > 0$. Furthermore, this holds even when every input subset is of the same size and with a promise that there exists $k$ subsets that covers each element exactly once.
\end{theorem}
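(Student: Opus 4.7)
I would prove the theorem by a gap-preserving parameterized reduction from \textsc{Max $k$-Coverage} to \capcolmed, sending the parameter $k$ linearly to $\ell$. Combined with Theorem \ref{thm:max-coverage}, which gives the $f(k)\cdot n^{o(k)}$ lower bound even under the promise of an exact $k$-cover with all sets of common size, this yields the stated $f(\ell)\cdot n^{o(\ell)}$ lower bound for approximating \capcolmed within $(1+2/e-\eps)$.

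Starting from a hard \textsc{Max $k$-Coverage} instance $(U,T_1,\dots,T_n)$ with $|T_i|=s$, $|U|=ks$, and a promised exact $k$-cover, I build the \capcolmed instance around the standard distance-$1$/distance-$3$ gadget: one facility $f_i$ per set $T_i$, one client $c_u$ per element $u\in U$, with $d(c_u,f_i)=1$ if $u\in T_i$ and $d(c_u,f_i)=3$ otherwise, extending by shortest paths to a metric. Under this gadget, a solution serving $m$ element-clients at distance $1$ has total cost $m+3(|U|-m)=3|U|-2m$, so the YES case gives cost $|U|$ and the NO case (no $k$ sets cover more than $(1-1/e+\delta)|U|$ elements) gives cost at least $(1+2/e-2\delta)|U|$. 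Choosing $\delta$ appropriately relative to $\eps$ produces the promised gap.

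The central new ingredient, which distinguishes this reduction from its \kmed analogue and uses the capacity constraint essentially, is an arrangement that forces the \capcolmed parameter to be $\ell=\Theta(k)$ rather than $\Theta(n-k)$. A direct transcription of the \kmed reduction (where opening exactly $k$ facilities corresponds to selecting $k$ sets) would require closing $n-k$ facilities, and $\ell=\Theta(n-k)$ is far too large to deduce a lower bound of the form $n^{o(\ell)}$ from a Max $k$-Coverage hardness that is parameterized by $k$. To keep $\ell$ linear in $k$, I would assign each $f_i$ a tight capacity of order $s$ and augment the instance with a small family of auxiliary filler clients anchored at the facilities. The fillers are arranged so that in any feasible solution only $\Theta(k)$ facilities can afford to be closed, while the \emph{identity} of the closed set still ranges over all $\binom{n}{\Theta(k)}$ possibilities and encodes a \textsc{Max $k$-Coverage} selection. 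With this in place, completeness follows by taking the closed set to correspond to the exact $k$-cover and routing each element at distance~$1$ using capacity exactly $s$ per covering facility; soundness amounts to showing that any \capcolmed solution of cost at most $(1+2/e-\eps)|U|$ decodes to a selection of $k$ sets covering more than $(1-1/e+\delta)|U|$ elements. Since the parameter translation $\ell=\Theta(k)$ is linear, any $f(\ell)\cdot n^{o(\ell)}$-time $(1+2/e-\eps)$-approximation for \capcolmed would contradict Theorem \ref{thm:max-coverage}.

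The main obstacle is precisely the design of the capacity-and-filler gadget so that (i) $\ell$ is forced down to $\Theta(k)$, (ii) completeness and soundness both go through without losing more than a lower-order term in the gap, and (iii) the fixed cost contributed by the auxiliary clients is decoupled from the coverage value, so that the relevant comparison remains $|U|$ versus $(1+2/e-2\delta)|U|$. The capacity constraint, absent in the uncapacitated case of Theorem \ref{thm:uncap}, is what enables this decoupling and yields the sharp separation between the two variants of the problem under the dual parameterization.
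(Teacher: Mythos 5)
There is a fundamental mismatch here: the statement you were asked to prove is Theorem~\ref{thm:max-coverage} itself, i.e., the Gap-ETH-based inapproximability of \textsc{Max $k$-Coverage} (no $f(k)\cdot n^{o(k)}$-time $(1-1/e+\delta)$-approximation, even with equal set sizes and a promised exact $k$-cover). Your proposal does not prove this; it \emph{assumes} it as a black box and then sketches a reduction from \textsc{Max $k$-Coverage} to \capcolmed. That argument is circular with respect to the target statement, and what it actually establishes is (a version of) Theorem~\ref{thm:hardness-capacitated}, a different theorem in the paper. In the paper, Theorem~\ref{thm:max-coverage} carries no proof at all --- it is imported from~\cite{M20} --- so a genuine proof attempt would have to reproduce the content of that reference: starting from Gap-ETH, passing through the parameterized gap-amplification machinery (e.g., a gap version of label cover obtained via distributed PCP-style arguments), and deriving the $(1-1/e+\delta)$ factor for \textsc{Max $k$-Coverage} together with the two structural promises. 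None of those ingredients appear in your write-up, and they are the entire difficulty of the statement; in particular the ``exactly once'' completeness promise and the uniform set sizes require care in the reduction and cannot be obtained by a generic padding step.

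As a secondary remark: the reduction you do sketch is essentially the paper's proof of Theorem~\ref{thm:hardness-capacitated} --- one set-facility per $T_i$ with capacity $|T_i|$, the distance-$1$/distance-$3$ gadget, and the observation that capacities force the closed set to consist of exactly $k$ set-facilities so that $\ell=k$. Where the paper makes this concrete (element-facilities of capacity $|U|+2$, with $|U|+1$ co-located element-clients each, plus $|T_i|$ clients co-located with each set-facility), your ``auxiliary filler clients'' are left unspecified, so even as a proof of that other theorem the soundness step (why no element-facility can be closed, and why the cost is exactly $3|U|-2m$) is not yet pinned down. But the essential defect remains that you have proved the consequence of Theorem~\ref{thm:max-coverage} rather than the theorem itself.
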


We can now prove our hardness result for \capcolmed.

\begin{theorem} \label{thm:hardness-capacitated}
Assuming Gap-ETH, there is no $f(\ell) \cdot n^{o(\ell)}$-time algorithm that can approximate \capcolmed to within a factor of $(1 + 2/e - \epsilon)$ for any function $f$ and any constant $\epsilon > 0$.
\end{theorem}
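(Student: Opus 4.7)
My plan is to give a parameterized reduction from \textsc{Max $k$-Coverage} to \capcolmed with a linear parameter blow-up $\ell = \Oh(k)$. Starting from a hard instance $(U, T_1, \dots, T_m, k)$ provided by Theorem~\ref{thm:max-coverage}, which has uniform set size $s = |U|/k$ and the exact-cover promise, I would build a \capcolmed instance whose metric follows the classical Guha--Khuller pattern of distance $1$ between a client $c_e$ and a facility $f_i$ whenever $e \in T_i$, and distance $3$ otherwise (extended via shortest paths so the triangle inequality holds). On top of this distance backbone I would layer a capacity structure that is the crucial new ingredient: its purpose is to encode the choice of $k$ subsets out of $m$ using only $\Oh(k)$ explicit closings, rather than the naive $m - k$ closings that would arise from the uncapacitated reduction. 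The idea is that a tight capacity budget forces the total open-capacity to equal the total demand $|U|$, which ties the identity of the open facilities to a cover condition on the associated $k$ subsets.

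With the construction in place, I would establish completeness and soundness in the familiar way. For completeness I would use the exact cover: close the $\Oh(k)$ facilities dictated by the capacity gadget so that the $k$ cover sets become the active open facilities, assign each client at cost $1$ to its unique covering open facility, saturate capacities, and obtain total cost $|U|$. For soundness I would show that any \capcolmed solution of cost at most $(1 + 2/e - \epsilon)|U|$ induces, through the capacity-encoded selection, a family of $k$ subsets covering almost all of $U$: every uncovered element incurs an excess of at least $2$ above the baseline of $|U|$, so the total excess of $(2/e - \epsilon)|U|$ allows at most $(1/e - \Omega(\epsilon))|U|$ uncovered elements. The family therefore covers a $(1 - 1/e + \delta)$-fraction of $U$ for some $\delta = \Omega(\epsilon)$, contradicting Theorem~\ref{thm:max-coverage} once the parameter-linear blow-up translates $f(\ell) \cdot n^{o(\ell)}$ into the forbidden $f'(k) \cdot n^{o(k)}$.

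The main obstacle I expect is the capacity gadget itself, which must simultaneously achieve $\ell = \Oh(k)$ and encode a full $k$-out-of-$m$ subset choice. Without capacities, the Guha--Khuller reduction necessarily closes $m - k$ facilities, and this slack is precisely what Theorem~\ref{thm:uncap} exploits to produce an EPAS for \colmed; so the capacities must do genuine work, coupling a large open-facility set through a global capacity constraint so that only $\Oh(k)$ degrees of freedom remain. Concretely, the gadget must (i) guarantee a Hall-type feasibility condition that lets the YES-case assignment achieve cost exactly $|U|$, (ii) ensure that in the NO case the capacity constraints propagate the deficient coverage into the promised cost-$3$ penalties on enough clients, and (iii) admit a polynomial-size \capcolmed description with $\ell = \Oh(k)$. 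Designing this capacity structure is the technical heart of the proof and is what distinguishes the capacitated variant from the uncapacitated one, giving the first hardness result in this line of work in which the capacities play an essential role.
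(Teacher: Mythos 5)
Your high-level framing matches the paper's: reduce from the Gap-ETH-hard \textsc{Max $k$-Coverage} of Theorem~\ref{thm:max-coverage} (uniform set size $|U|/k$, exact-cover promise), keep the parameter linear in $k$, arrange the yes-case cost to be $|U|$, and argue soundness by charging an excess of $2$ per uncovered element so that cost at most $(1+2/e-\epsilon)|U|$ forces coverage at least $(1-1/e+\Omega(\epsilon))|U|$. But the proposal has a genuine gap: the ``capacity gadget'' that is supposed to encode a $k$-out-of-$m$ subset choice with only $\Oh(k)$ closings is never constructed --- you explicitly defer it as the technical heart of the proof, yet it is exactly the content of the theorem. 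Moreover, the picture you sketch (the $k$ selected cover sets remain \emph{open} and serve element clients at cost $1$, with a global capacity budget somehow deactivating the other $m-k$ set facilities) is not obviously realizable: facilities that are open but ``inactive'' are not a notion the problem supports, and nothing in your sketch prevents cheap assignments through unselected set facilities, which is precisely the slack that Theorem~\ref{thm:uncap} exploits in the uncapacitated case.

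The paper resolves this by inverting the correspondence: closing a set facility \emph{is} selecting that set, so $\ell=k$ with no extra gadget. Each facility $f^{set}_i$ has capacity $|T_i|$ and is anchored by $|T_i|$ co-located clients; each element facility $f^{element}_u$ has capacity $|U|+2$ and is anchored by $|U|+1$ co-located clients, leaving exactly one spare slot. Closing element facilities is infeasible (total capacity would fall short), so any solution closes $k$ set facilities, displacing exactly $|U|$ set clients; since every other set facility is saturated by its own anchor clients, each displaced client must occupy the unique spare slot of some element facility, at distance $1$ if that element lies in one of the selected sets and $3$ otherwise. This yields cost exactly $|T_{i_1}\cup\cdots\cup T_{i_k}|+3\,|U\setminus(T_{i_1}\cup\cdots\cup T_{i_k})|$, from which completeness ($=|U|$ under the exact-cover promise) and your intended soundness calculation follow immediately. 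Without this (or an equivalent) concrete construction --- in particular the saturation of unselected set facilities and the one-spare-slot element facilities, which are what make the capacities ``do genuine work'' --- the proposal does not constitute a proof.
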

\begin{proof}
Let $U, T_1, \dots, T_n$ be an instance of \textsc{Max $k$-Coverage}. We create an~instance $(F, C)$ of \capcolmed as follows.
\begin{itemize}
\item For each subset $T_i$ with $i \in [n]$, create a facility $f^{set}_i$ with capacity $|T_i|$. For each element $u \in U$, create a facility $f^{element}_u$ with capacity $|U| + 2$.
\item For every $i \in [n]$, create $|T_i|$ clients $c_{i, 1}^{set}, \dots, c_{i, |T_i|}^{set}$. For each $j \in [|T_i|]$, we~define the distance from $c_{i, j}^{set}$ to the facilities by
\begin{align*}
d(c_{i, j}^{set}, f^{set}_i) &= 0, \\
d(c_{i, j}^{set}, f^{element}_u) &= 1 &\forall u \in T_i, \\
d(c_{i, j}^{set}, f^{set}_{i'}) &= 2 &\forall i' \ne i, \\
d(c_{i, j}^{set}, f^{element}_u) &= 3 &\forall u \notin T_i.
\end{align*}
\item For every element $u \in U$, create $|U| + 1$ clients $c_{u, 1}^{element}, \dots, c_{u, |U| + 1}^{element}$ and, for each $j \in [|U| + 1]$, define the distance from $c_{u, j}^{element}$ to the facilities by
\begin{align*}
d(c_{u, j}^{element}, f_u^{element}) &= 0, \\
d(c_{u, j}^{element}, f_i^{set}) &= 1 &\forall T_i \ni u, \\
d(c_{u, j}^{element}, f_{u'}^{element}) &= 2 &\forall u' \ne u, \\
d(c_{u, j}^{element}, f_i^{set}) &= 3 &\forall T_i \not\ni u.
\end{align*}
\item Let $\ell = k$.
\end{itemize}

Suppose that we have an $f(\ell) \cdot n^{o(\ell)}$-time  $(1 + 2/e - \epsilon)$-approximation algorithm for \capcolmed. We will use it to approximate \textsc{Max $k$-Coverage} instance with $|T_1| = \cdots = |T_n| = |U|/k$ with a promise that there exists $k$ subsets that covers each element exactly once, as follows. We run the above reduction to produce an instance $(F, C)$ and run the approximation algorithm for \capcolmed; let $S \subseteq F$ be the produced solution. Notice that $S$ may not contain any element-facility, as otherwise there would not even be enough capacity left to serve all clients. Hence, $S = \{f^{set}_{i_1}, \dots, f^{set}_{i_k}\}$. We claim that $T_{i_1}, \dots, T_{i_k}$ is an $(1 - 1/e + \epsilon/2)$-approximate solution for \textsc{Max $k$-Coverage}.

To see that $T_{i_1}, \dots, T_{i_k}$ is an $(1 - 1/e + \epsilon/2)$-approximate solution for \textsc{Max $k$-Coverage}, notice that the cost of closing $\{f^{set}_{i_1}, \dots, f^{set}_{i_k}\}$ is exactly $|T_{i_1} \cup \cdots \cup T_{i_k}| + 3 \cdot |U \setminus (T_{i_1} \cup \cdots \cup T_{i_k})|$ because each element-facility $f_u^{element}$ can only serve one more client in addition to $c_{u, 1}^{element}, \dots, c_{u, |U| + 1}^{element}$. (Note that we may assume without loss of generality that $f_u^{element}$ serves $c_{u, 1}^{element}, \dots, c_{u, |U| + 1}^{element}$.) Moreover, there are exactly $|T_{i_1}| + \cdots + |T_{i_k}| = |U|$ clients left to be served after the closure of $\{f^{set}_{i_1}, \dots, f^{set}_{i_k}\}$. Hence, each element-facility $f_u^{element}$ with $u \in T_{i_1} \cup \cdots \cup T_{i_k}$ can serve a client of distance one from it. All other element-facilities will have to serve a client of distance three from it. This results in the cost of exactly $|T_{i_1} \cup \cdots \cup T_{i_k}| + 3 \cdot |U \setminus (T_{i_1} \cup \cdots \cup T_{i_k})|$. Now, since we are promised that there exists $k$ subsets that uniquely covers the universe $U$, the optimum of \capcolmed must be $|U|$. Since our (assumed) approximation algorithm for \capcolmed has approximation factor $(1 + 2/e - \epsilon)$, we must have $|T_{i_1} \cup \cdots \cup T_{i_k}| + 3 \cdot |U \setminus (T_{i_1} \cup \cdots \cup T_{i_k})| \leq |U| \cdot (1 + 2/e - \epsilon)$, which implies that $|T_{i_1} \cup \cdots \cup T_{i_k}| \geq |U| \cdot (1 - 1/e + \epsilon/2)$. Hence, the proposed algorithm is an~$f(k) \cdot n^{o(k)}$-time algorithm that approximates \textsc{Max $k$-Coverage} to within a~factor of $(1 - 1/e + \epsilon/2)$, which by Theorem~\ref{thm:max-coverage} contradicts Gap-ETH.
\end{proof}

\section{Conclusions and open problems}
We have presented a parameterized approximation scheme for \colmed and shown that its capacitated version does not admit such a~scheme. 
It remains open whether \capcolmed admits any constant-factor FPT approximation.
Obtaining such a result might be an~important step towards getting a~constant-factor polynomial-time approximation,
which is a~major open problem.

Another interesting question concerns whether one can employ the framework of lossy kernelization~\cite{lossy} to get a~polynomial size approximate kernelization scheme (PSAKS) for \colmed, which would be a~strengthening of our main result.
In other words, can we process an~instance $\mathcal{I}$ in polynomial time to produce an~equivalent instance $\mathcal{I'}$ of size $poly(\ell)$ so that solving $\mathcal{I'}$ would provide a~$(1+\eps)$-approximation for $\mathcal{I}$?

\newpage

\bibliographystyle{splncs04}
\bibliography{refs}

\end{document}